\documentclass[journal,draftcls,onecolumn,12pt,twoside]{IEEEtranTCOM}
%\documentclass[journal]{IEEEtranTCOM}
%
% If IEEEtran.cls has not been installed into the LaTeX system files,
% manually specify the path to it like:
% \documentclass[journal]{../sty/IEEEtran}

\normalsize

\ifCLASSINFOpdf
  % \usepackage[pdftex]{graphicx}
  % declare the path(s) where your graphic files are
  % \graphicspath{{../pdf/}{../jpeg/}}
  % and their extensions so you won't have to specify these with
  % every instance of \includegraphics
  % \DeclareGraphicsExtensions{.pdf,.jpeg,.png}
\else
  % or other class option (dvipsone, dvipdf, if not using dvips). graphicx
  % will default to the driver specified in the system graphics.cfg if no
  % driver is specified.
  % \usepackage[dvips]{graphicx}
  % declare the path(s) where your graphic files are
  % \graphicspath{{../eps/}}
  % and their extensions so you won't have to specify these with
  % every instance of \includegraphics
  % \DeclareGraphicsExtensions{.eps}
\fi

\usepackage{latexsym}
\usepackage{stmaryrd}
\usepackage{graphicx, amssymb, amsmath}
\usepackage{cite}
\usepackage{amsthm}
\usepackage{verbatim}
\usepackage{algorithm}
\usepackage{algorithmic}

% *** Do not adjust lengths that control margins, column widths, etc. ***
% *** Do not use packages that alter fonts (such as pslatex).         ***
% There should be no need to do such things with IEEEtran.cls V1.6 and later.
% (Unless specifically asked to do so by the journal or conference you plan
% to submit to, of course. )

% correct bad hyphenation here
\hyphenation{op-tical net-works semi-conduc-tor}

\begin{document}
%
% paper title
% can use linebreaks \\ within to get better formatting as desired
\title{Optimal Power Allocation for Energy Harvesting and Power Grid Coexisting Wireless Communication Systems}
%
%
% author names and IEEE memberships
% note positions of commas and nonbreaking spaces ( ~ ) LaTeX will not break
% a structure at a ~ so this keeps an author's name from being broken across
% two lines.
% use \thanks{} to gain access to the first footnote area
% a separate \thanks must be used for each paragraph as LaTeX2e's \thanks
% was not built to handle multiple paragraphs
%

\author{Jie Gong,~\IEEEmembership{Student Member,~IEEE},
Sheng Zhou,~\IEEEmembership{Member,~IEEE}, Zhisheng
Niu,~\IEEEmembership{Fellow,~IEEE}
\thanks{The authors are with Tsinghua National Laboratory for Information Science and
Technology, Department of Electronic Engineering, Tsinghua
University, Beijing 100084, P.~R.~China. Email:
gongj08@mails.tsinghua.edu.cn, \{sheng.zhou, niuzhs\}@tsinghua.edu.cn}
\thanks{The research work is partially sponsored by the National Basic
Research Program of China (973Green: No.~2012CB316001); by the National Natural
Science Foundation of China (No.~61201191, No.~61021001, No.~60925002); and
by Hitachi R\&D Headquarter.}}

% note the % following the last \IEEEmembership and also \thanks -
% these prevent an unwanted space from occurring between the last author name
% and the end of the author line. i.e., if you had this:
%
% \author{....lastname \thanks{...} \thanks{...} }
%                     ^------------^------------^----Do not want these spaces!
%
% a space would be appended to the last name and could cause every name on that
% line to be shifted left slightly. This is one of those "LaTeX things". For
% instance, "\textbf{A} \textbf{B}" will typeset as "A B" not "AB". To get
% "AB" then you have to do: "\textbf{A}\textbf{B}"
% \thanks is no different in this regard, so shield the last } of each \thanks
% that ends a line with a % and do not let a space in before the next \thanks.
% Spaces after \IEEEmembership other than the last one are OK (and needed) as
% you are supposed to have spaces between the names. For what it is worth,
% this is a minor point as most people would not even notice if the said evil
% space somehow managed to creep in.

% The paper headers
\markboth{IEEE Transactions on Communications}%
{Submitted paper}
% The only time the second header will appear is for the odd numbered pages
% after the title page when using the twoside option.
%
% *** Note that you probably will NOT want to include the author's ***
% *** name in the headers of peer review papers.                   ***
% You can use \ifCLASSOPTIONpeerreview for conditional compilation here if
% you desire.

% If you want to put a publisher's ID mark on the page you can do it like
% this:
%\IEEEpubid{0000--0000/00\$00.00~\copyright~2007 IEEE}
% Remember, if you use this you must call \IEEEpubidadjcol in the second
% column for its text to clear the IEEEpubid mark.

% use for special paper notices
%\IEEEspecialpapernotice{(Invited Paper)}

% make the title area
\maketitle

\begin{abstract}
%\boldmath
This paper considers the power allocation of a single-link wireless communication with joint energy harvesting and grid power supply. We formulate the problem as minimizing the grid power consumption with random energy and data arrival, and analyze the structure of the optimal power allocation policy in some special cases. For the case that all the packets are arrived before transmission, it is a dual problem of throughput maximization, and the optimal solution is found by the \emph{two-stage water filling (WF)} policy, which allocates the harvested energy in the first stage, and then allocates the power grid energy in the second stage. For the random data arrival case, we first assume grid energy or harvested energy supply only, and then combine the results to obtain the optimal structure of the coexisting system. Specifically, the \emph{reverse multi-stage WF} policy is proposed to achieve the optimal power allocation when the battery capacity is infinite. Finally, some heuristic online schemes are proposed, of which the performance is evaluated by numerical simulations.
\end{abstract}
% IEEEtran.cls defaults to using nonbold math in the Abstract.
% This preserves the distinction between vectors and scalars. However,
% if the journal you are submitting to favors bold math in the abstract,
% then you can use LaTeX's standard command \boldmath at the very start
% of the abstract to achieve this. Many IEEE journals frown on math
% in the abstract anyway.

% Note that keywords are not normally used for peerreview papers.
\begin{IEEEkeywords}
Energy harvesting and power grid coexisting, convex optimization, two-stage water filling, reverse multi-stage water filling.
\end{IEEEkeywords}

% For peer review papers, you can put extra information on the cover
% page as needed:
% \ifCLASSOPTIONpeerreview
% \begin{center} \bfseries EDICS Category: 3-BBND \end{center}
% \fi
%
% For peerreview papers, this IEEEtran command inserts a page break and
% creates the second title. It will be ignored for other modes.
\IEEEpeerreviewmaketitle

\section{Introduction}
% The very first letter is a 2 line initial drop letter followed
% by the rest of the first word in caps.
%
% form to use if the first word consists of a single letter:
% \IEEEPARstart{A}{demo} file is ....
%
% form to use if you need the single drop letter followed by
% normal text (unknown if ever used by IEEE):
% \IEEEPARstart{A}{}demo file is ....
%
% Some journals put the first two words in caps:
% \IEEEPARstart{T}{his demo} file is ....
%
% Here we have the typical use of a "T" for an initial drop letter
% and "HIS" in caps to complete the first word.
\IEEEPARstart{M}{inimizing} the network energy consumption has become one of the key design
requirements of wireless communication systems in recent years \cite{Marsan:2009, Oh:2011}. By
exploiting the renewable energy such as solar power, wind power and
so on, energy harvesting technology can efficiently reduce CO${}_2$
emissions \cite{David:2010}. It can greatly reduce the load of power grid when the renewable energy usage well meets the network traffic load. However, due to the randomness of the renewable
energy, users' quality of service (QoS) may not be guaranteed all
the time, which requires other complementary stable power
supplies. As the power grid is capable of providing persistent power
input, the coexistence of energy harvesting and grid power supply is
considered as a promising technology to tackle the problem of simultaneously
guaranteeing the users' QoS and minimizing the power grid energy consumption \cite{Cui:2012}.

A number of research paper have focused on the energy harvesting issues
recently. The online power allocation policies of energy
harvesting transmitter with a rechargeable battery are studied in
\cite{Fu:2003, Gatz:2010, Shar:2010, Ho:2010} using Markov decision
process (MDP) approach \cite{Dimi:2005}. The energy allocation and
admission control problem is studied in \cite{Fu:2003} for
communications satellites. Ref. \cite{Gatz:2010} considers the
cross-layer resource allocation problem to maximize the total system
utility. In \cite{Shar:2010}, the authors develop the energy
management policies with stability guarantee in sensor networks. And
throughput maximization with causal side information and that with
full side information are both studied in \cite{Ho:2010}.
Although some optimal online policies are proposed, the structure of
the optimal power allocation is still not clearly presented via MDP
approach, which in addition usually suffers from the \emph{curse of
dimensionality} \cite{Dimi:2005}. Recently, there have been research efforts on the
analysis of the optimal offline power allocation structure
\cite{Jing:2012, Tut:2012, Ozel:2011}. In \cite{Jing:2012}, the problem of
minimizing the transmission completion time with infinite battery
capacity in non-fading channel is studied in two scenarios, i.e.,
all packets are ready before transmission and packets arrive during
transmission. Ref.~\cite{Tut:2012} finds the optimal
transmission policy to maximize the short-term throughput with
limited energy storage capacity, and exploits the relation between the throughput maximization and the transmission completion time minimization. The power allocation of energy harvesting systems in fading
channel is formulated as a convex optimization problem in
\cite{Ozel:2011}, and the optimal \emph{directional
water-filling (WF)} policy is introduced.

Besides the analysis of energy harvesting system listed above, there
is some other work on offline power allocation, which optimizes the
energy efficiency assuming that the required energy is always available (e.g.
power grid). In \cite{Biyik:2002}, a lazy
packet scheduling policy is proved optimal to minimize the total
energy consumption within a deadline. The
similar problem with individual packet delay constraint is studied
in \cite{Chen:2007}. The authors in \cite{Zafer:2009} considered the
strict QoS constraints such as individual packet deadlines, finite
buffer and so on, and proposed a general calculus approach. Different from the existing work, we consider the power allocation problem in the wireless fading channel with the coexistence of energy harvesting and power grid.
To the best of our knowledge, very limited work such as \cite{Cui:2012} has focused on this topic, which however, only considers the binary power allocation.

In particular, we study the offline power allocation problem aiming
to find the optimal structure in the energy harvesting and power
grid coexisting system. We assume that the data is randomly arrived in each transmission
frame, and minimize the average power grid energy
consumption while completing the required data transmission before a given deadline. The
main contributions are presented as follows:

\begin{itemize}
 \item[$\bullet$] The offline grid power minimization problem is formulated and then converted into a convex optimization problem. Then the closed-form optimal power is obtained in terms of Lagrangian multipliers.
 \item[$\bullet$] The structure of the optimal power allocation is analyzed in some special cases. Specifically, in the case that all the data is ready before transmission, the problem is a dual problem of the throughput
 maximization and is solved by the \emph{two-stage WF} algorithm. In the
 case that the battery capacity is infinite, the optimal water levels
 are non-decreasing, and the optimal power allocation structure can be
 obtained by the \emph{reverse multi-stage WF} policy, which
 allocates the harvested energy and the grid energy one by one from the last frame to the beginning.
 \item[$\bullet$] Some sub-optimal online schemes based on the analysis of the optimal
 power allocation structure, including \emph{constant water level}
 policy and \emph{adaptive water level} policy, are proposed and evaluated by numerical simulations. It is shown that the adaptive water level policy performs better than the constant one considering both the grid energy consumption and the QoS guarantee.
\end{itemize}

The rest of the paper is organized as follows. Section
\ref{sec:model} introduces the model of the energy harvesting and
power grid coexisting system. The offline grid power minimization problem is formulated in Section \ref{sec:gpmin}. In Section \ref{sec:struct}, we study the optimal solution for the minimization problem by cases. Then some online algorithms are proposed in Section \ref{sec:online}, and numerical results are presented in Section \ref{sec:simu}. Finally, Section \ref{sec:conc} concludes the paper.

\begin{figure}
\centering
\includegraphics[width=5in]{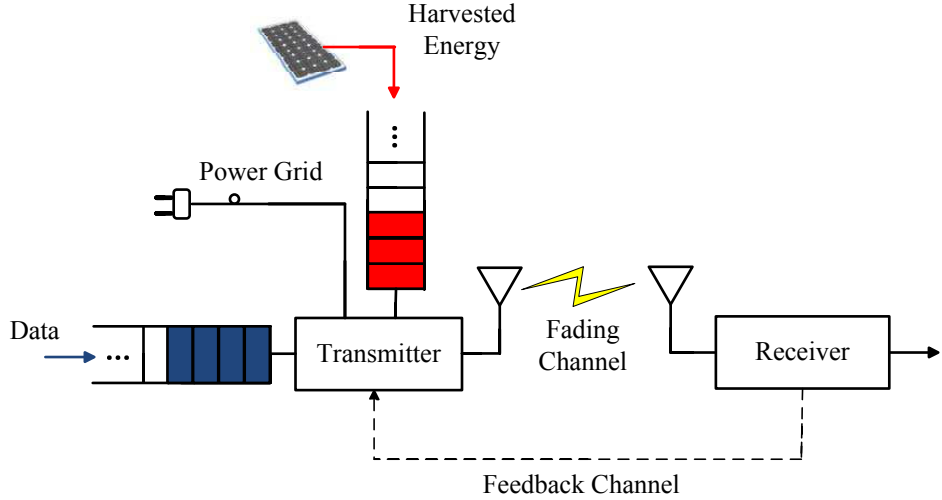}
\caption{Single-link wireless communication model with the coexistence of energy
harvesting and power grid.} \label{fig:system}
\end{figure}

\section{System Model} \label{sec:model}
We consider a single wireless link as shown in Fig.
\ref{fig:system}, where the transmitter is powered by both energy
harvesting and power grid. The harvested energy is stored in a
battery with capacity $E_{\mathrm{max}}$ for data
transmission usage. If the energy in the battery is insufficient, the
transmitter uses the power grid to guarantee the user's QoS. In this paper, we assume the energy is used only
for transmission, i.e, the processing energy is ignored.

Assume the system is slotted with frame length $T_f$. We study the optimal power allocation during a finite number of frames $N$, indexed by $\{1, \ldots, N\}$. We adopt the block fading model, i.e, the channel keeps
constant during each frame, but varies from frame to frame. The received signal in frame $i \in \{1, \ldots, N\}$ is given by
\begin{equation}
y_i = h_i\sqrt{p_i}x_i + n_i,
\end{equation}
where $h_i$ is the channel gain, $p_i$ is the transmit power, $x_i$ is the input symbol with unit norm, and
$n_i$ is the additive white Gaussian noise (AWGN) with zero mean
and unit variance. The reference signal-to-noise ratio (SNR), defined as the SNR with unit transmit power, during the frame $i$ is denoted by $\gamma_i = |h_i|^2$. Then, the instantaneous rate $r_i$ in bits per channel
use is
\begin{equation}
r_i = \frac{1}{2}\log_2(1+\gamma_ip_i). \label{eqn:rate}
\end{equation}

Assume the initial battery energy is $E_0$, and the harvested energy $E_i, i = 1, \ldots, N-1$ arrives at the beginning of each frame $i$. Part of the transmit power $p_i$ is supplied by the battery, denoted as $p_{H,i}$. And the rest is supplied by the power grid $p_{G,i} = p_i-p_{H,i}$. The frame and energy arrival model is depicted in Fig. \ref{fig:timeline}.

\begin{figure}
\centering
\includegraphics[width=5in]{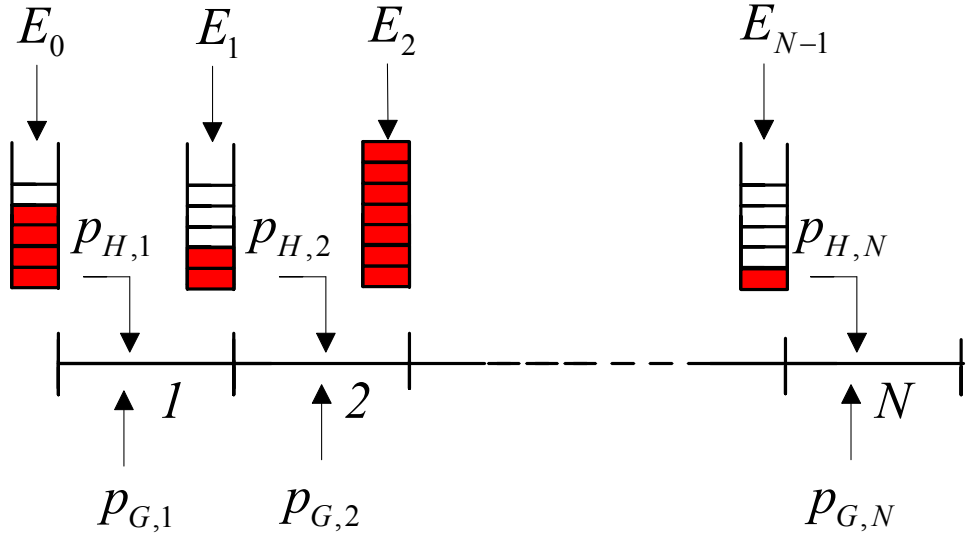}
\caption{Frame model and the energy arrival and storage structure.} \label{fig:timeline}
\end{figure}

There are several constraints on the battery power $p_{H,i}$: Since
the harvested energy cannot be consumed before its arrival, the
first constraint is the energy causality
\begin{equation}
\sum\limits_{i=1}^{k}T_fp_{H,i} \le \sum\limits_{i=0}^{k-1}E_i, \quad
k = 1, 2, \ldots, N. \label{cnstr:causal}
\end{equation}
The second constraint is the limited battery capacity. The battery overflow happens when the reserved energy
plus the harvested energy exceeds the battery capacity, which however, is not preferred because the data rate can be increased if the energy is used in advance instead of overflowed. So we have
\begin{equation}
\sum\limits_{i=0}^{k}E_i - \sum\limits_{i=1}^{k}T_fp_{H,i} \le
E_{\mathrm{max}}, \quad k = 1, 2, \ldots, N-1. \label{cnstr:Ecap}
\end{equation}

On the other hand, the grid power satisfies the average power constraint
\begin{equation}
\frac{1}{N}\sum\limits_{i=1}^{N}p_{G,i} =
\frac{1}{N}\sum\limits_{i=1}^{N}(p_i - p_{H,i}) \le
P_{G,\mathrm{ave}}, \label{cnstr:Pave}
\end{equation}
where $P_{G, \mathrm{ave}}$ is the average grid power supply.

\section{Grid Power Minimization Problem} \label{sec:gpmin}

We study the grid power minimization problem with offline setup, i.e., the energy arrivals $E_i$ and the channel gains $\gamma_i$
of all the frames are known in advance. Assume that the amount of $B_i$ bits arrive at the beginning of
frame $i+1$, where $i = 0, 1, \ldots, N-1$, and all the quantities are known. The constraint due to
the random packet arrival is that the number of transmitted bits can
not exceed those in the buffer, which can be expressed as
\begin{eqnarray}
\sum\limits_{i=1}^k \frac{T_f}{2} \log_2(1 + \gamma_ip_i)  \le
\sum\limits_{i=0}^{k-1} B_i, & k = 1, \ldots, N-1 \label{constr:RanBit1}\\
\sum\limits_{i=1}^N \frac{T_f}{2} \log_2(1 + \gamma_ip_i) =
\sum\limits_{i=0}^{N-1} B_i.&{}\label{constr:RanBit2}
\end{eqnarray}
By using (\ref{constr:RanBit2}) minus (\ref{constr:RanBit1}) for all
$k = 1, \ldots, N-1$, and then relaxing (\ref{constr:RanBit2}) to
inequality, we get the data constraint as
\begin{equation}
\sum\limits_{i=k}^N \frac{T_f}{2} \log_2(1 + \gamma_ip_i) \ge
\sum\limits_{i=k-1}^{N-1} B_i, \quad k = 1, \ldots, N
\label{constr:RanBit}
\end{equation}
which is equivalent with the original constraints (\ref{constr:RanBit1}) and (\ref{constr:RanBit2}) because while
achieving the optimality, (\ref{constr:RanBit}) for $k = 1$ must be
satisfied with equality. Otherwise, we can always decrease the
transmission rate by reducing the energy consumption without
conflicting any other constraints. In addition, the relaxation of (\ref{constr:RanBit2}) avoids the conflicts between power and data constraints. For instance, if $E_0 = E_1 =
E_{\mathrm{max}}$ and $B_0 = 0$, the original constraint
(\ref{constr:RanBit1}) conflicts with the battery constraint
(\ref{cnstr:Ecap}) for $k=1$. In this case, the problem is infeasible. However, the problem with constraint (\ref{constr:RanBit})
is always feasible since we can use the extra harvested
energy to transmit packets carrying no data to avoid battery overflow. In the example, the feasible power allocation policy is to allocate $E_0$ to frame 1 to transmit packets without any data.

We aim at minimizing the energy consumption of power grid under the
constraints of required data transmission, causality of harvested
energy, battery capacity and non-negative power allocation. The
problem is formulated as follows
\begin{eqnarray}
&\mathop{\min}\limits_{\begin{subarray}{c} p_i, p_{H,i} \\ i = 1, \ldots, N \end{subarray}} & \sum\limits_{i=1}^{N}T_f(p_i-p_{H,i}) \label{opt:pmin}\\
&\textrm{s.t.} &  (\ref{cnstr:causal}), (\ref{cnstr:Ecap}), \mathrm{~and~} (\ref{constr:RanBit}), \nonumber\\
& & p_{H,k} \le p_k , \quad \forall k \label{cnstr:pG}\\
& & p_{H,k} \ge 0 , \quad \forall k \label{cnstr:pH}
%& & \sum\limits_{i=1}^{k}T_fp_{H,i} \le \sum\limits_{i=0}^{k-1}E_i, \quad \forall k \nonumber\\
%& & \sum\limits_{i=0}^{k}E_i - \sum\limits_{i=1}^{k}T_fp_{H,i} \le
%E_{\mathrm{max}}, \quad k \le N-1 \nonumber\\
%& & p_{H,k} \le p_k , \quad \forall k \nonumber\\
%& & p_{H,k} \ge 0 . \quad \forall k \nonumber
\end{eqnarray}
where the constraints (\ref{cnstr:pG}) and (\ref{cnstr:pH}) imply
that the energy from either power grid or battery is non-negative.
Notice that there are three cases of the solutions: (a) The harvested energy exceeds the required energy to transmit all the arrived packets, which results in $p_i = p_{H,i}$, and there are multiple solutions; (b) The harvested energy is insufficient to transmit the packets. Then we have $p_i \ge p_{H,i}$, and the inequality is satisfied for at least one $i$; (c) The harvested energy is just enough to achieve the optimal power allocation, and any additional packets to transmit will result in non-zero grid power. In our analysis, we mainly focus on the case (b) where grid power is necessary. The case (c) is also studied as it helps us to understand the structure of the optimal harvested power allocation.

Also note that we ignore the average grid power constraint (\ref{cnstr:Pave}), which decides the feasibility of the problem. If we can find a feasible solution for problem (\ref{opt:pmin}) without violating (\ref{cnstr:Pave}), it can be ignored. Otherwise, the required bits can not be transmitted by the deadline. Hence, the best we can do is try to maximize the number of transmitted bits, which can be formulated as a throughput maximization problem
\begin{eqnarray}
&\mathop{\max}\limits_{\begin{subarray}{c} p_i, p_{H,i} \\ i = 1, \ldots, N \end{subarray}} & \sum\limits_{i=1}^{N} \frac{T_f}{2}
\log_2(1+\gamma_ip_i) \label{opt:thrmax}\\
&\textrm{s.t.} & (\ref{cnstr:causal}), (\ref{cnstr:Ecap}), (\ref{cnstr:Pave}), (\ref{cnstr:pG}), \mathrm{~and~} (\ref{cnstr:pH}). \nonumber
\end{eqnarray}
In the following section, we analyze the optimal solution for the grid power minimization problem assuming it is feasible (the constraint (\ref{cnstr:Pave}) is ignored). Also, the relation between the throughput maximization and the grid power minimization is briefly discussed.

\section{Offline Optimal Solution Structure} \label{sec:struct}
The objective (\ref{opt:pmin}) is a linear function, the constraints
(\ref{constr:RanBit}) are convex for all $k$ since it is a sum of
$\log$ functions larger than a threshold, and the others are all
linear constraints. Consequently, the optimization
problem is a convex optimization problem, and the
optimal solution satisfies the KKT conditions \cite{Boyd:2004}. Define the Lagrangian
function for any multipliers $\lambda_k \ge 0, \mu_k \ge 0,\eta_k \ge 0, \alpha_k
\ge 0, \beta_k \ge 0$ as
\begin{eqnarray}
\mathcal{L} & = & \sum\limits_{i=1}^{N}T_f(p_i-p_{H,i}) \nonumber\\
& & +\sum_{k=1}^{N}\lambda_k\left(\sum\limits_{i=1}^{k}T_fp_{H,i}
- \sum\limits_{i=0}^{k-1}E_i\right) \nonumber\\
& & +\sum_{k=1}^{N-1}\mu_k\left(\sum\limits_{i=0}^{k}E_i -
\sum\limits_{i=1}^{k}T_fp_{H,i} - E_{\mathrm{max}}\right) \nonumber\\
& & -\sum\limits_{k=1}^N \eta_k \left(\sum\limits_{i=k}^N
\frac{T_f}{2} \log_2(1 + \gamma_ip_i)
-\sum\limits_{i=k-1}^{N-1} B_i \right) \nonumber\\
& & +\sum_{k=1}^{N}\alpha_k(p_{H,k} - p_k) -\sum_{k=1}^{N}\beta_kp_{H,k} \label{eqn:lagr2}
\end{eqnarray}
with additional complementary slackness conditions
\begin{eqnarray}
\lambda_k\left(\sum\limits_{i=1}^{k}T_fp_{H,i} - \sum\limits_{i=0}^{k-1}E_i\right) = 0,
& & \forall k \label{comp:causal}\\
\mu_k\left(\sum\limits_{i=0}^{k}E_i -\sum\limits_{i=1}^{k}T_fp_{H,i} - E_{\mathrm{max}}\right) = 0,
& & k < N \label{comp:Ecap}\\
\eta_k \left(\sum\limits_{i=k}^N \frac{T_f}{2} \log_2(1 + \gamma_ip_i)
- \sum\limits_{i=k-1}^{N-1} B_i \right) = 0. & & \forall k
\label{comp:Bits}\\
\alpha_k(p_{H,k} - p_k) = 0, & & \forall k \label{comp:pG}\\
\beta_kp_{H,k} = 0, & & \forall k \label{comp:pH}
\end{eqnarray}

We apply the KKT optimality conditions to the Lagrangian function
(\ref{eqn:lagr2}). By setting $\partial\mathcal{L}/\partial p_i = \partial\mathcal{L}/\partial p_{H,i} =0$, we obtain the unique optimal power levels $p_i^*$ in terms of the Lagrange multipliers as
\begin{eqnarray}
p_i^* & = & \frac{\sum_{k=1}^i\eta_k}{1-\alpha_i}-\frac{1}{\gamma_i}, \label{opt:pran1}\\
\textrm{or}\; p_i^* & = & \frac{ \sum_{k=1}^i\eta_k }
{\sum_{k=i}^{N}(\lambda_k-\mu_k)-\beta_i}-\frac{1}{\gamma_i},
\label{opt:pran2}
\end{eqnarray}
where $\mu_N = 0$. Notice that there is no closed-form expression for the parameter $p_{H,i}$, which indicates that the power supplied from battery can be of many choices as long as the constraints are satisfied and the optimal total power $p_i^*$ is achieved. For instance, if one of the optimal power allocation for frames $i$ and $i+1$ is $p_i= p_{i+1} = 2, p_{H,i}=1.5, p_{H,i+1}=0.5$, then it is also optimal by allocating $p_{H,i}= p_{H,i+1}=1$. Denote $p_{H,i}^*$ as a feasible optimal solution in order to distinguish from the optimization parameter $p_{H,i}$. Hence, the optimal power allocation is
\begin{equation}
p_i^* = \left[\nu_i-\frac{1}{\gamma_i}\right]^+, \label{opt:p}
\end{equation}
where $[x]^+ = \max\{x,0\}$, and the water level is expressed by either
\begin{eqnarray}
\nu_i &=& \sum_{k=1}^{i}\eta_k, \label{wl:1} \\\textrm{or}\; \nu_i
&=& \frac{\sum_{k=1}^{i}\eta_k}{\sum_{k=i}^{N}(\lambda_k-\mu_k)},
\label{wl:2}
\end{eqnarray}
depending on how the battery power and the grid power are allocated.

Based on the optimal power allocation solution
(\ref{opt:p})-(\ref{wl:2}), we study the following special cases to find the structure of
the optimal power allocation.

\theoremstyle{proposition} \newtheorem{proposition}{Proposition}

\subsection{Packets Ready Before Transmission}
Assume the packet arrival sequence is $\{B, 0, \ldots, 0\}$, i.e., all the packets are ready before transmission. In this case, the optimal power allocation is simplified as
\begin{eqnarray}
p_i^* & = & \frac{\eta_1}{1-\alpha_i}-\frac{1}{\gamma_i}, \label{opt:pktready1}\\
\textrm{or}\; p_i^* & = & \frac{ \eta_1 }
{\sum_{k=i}^{N}(\lambda_k-\mu_k)-\beta_i}-\frac{1}{\gamma_i},
\label{opt:pktready2}
\end{eqnarray}

We have the following proposition:
\begin{proposition} \label{prop:noRandom}
For given harvested energy profile $E_{H,i}, i = 1, \ldots, N$, define $f(P_{G,\mathrm{ave}}) = \max$ $\sum_{i=1}^{N} \frac{T_f}{2} \log(1+\gamma_ip_i)$ under the constraints (\ref{cnstr:causal}), (\ref{cnstr:Ecap}), (\ref{cnstr:Pave}), (\ref{cnstr:pG}), and (\ref{cnstr:pH}). Assume the packet arrival sequence is $\{B, 0, \ldots, 0\}$, and define $g(B) = \min\frac{1}{N} \sum_{i=1}^{N} T_f(p_i - p_{H,i})$ under the constraints (\ref{cnstr:causal}), (\ref{cnstr:Ecap}), (\ref{constr:RanBit}), (\ref{cnstr:pG}), and (\ref{cnstr:pH}). We have $f = g^{-1}$.
\end{proposition}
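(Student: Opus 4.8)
The plan is to recognize $f$ and $g$ as two parametrizations of the same throughput--versus--grid-power frontier and then invert one into the other directly. First I would note that both problems optimize over the \emph{same} feasible set $S=\{(p_i,p_{H,i})\}$ cut out by energy causality (\ref{cnstr:causal}), the battery-capacity bound (\ref{cnstr:Ecap}), and non-negativity (\ref{cnstr:pG})--(\ref{cnstr:pH}); they differ only in which of the two remaining scalars --- the throughput $R(\mathbf p):=\sum_{i=1}^N\frac{T_f}{2}\log(1+\gamma_i p_i)$ and the average grid power $G(\mathbf p,\mathbf p_H):=\frac1N\sum_{i=1}^N T_f(p_i-p_{H,i})$ --- is constrained and which is optimized. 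For the profile $\{B,0,\dots,0\}$ the bit constraints (\ref{constr:RanBit}) for $k\ge 2$ have right-hand side $0$ and are automatic, so (\ref{constr:RanBit}) collapses to the single inequality $R(\mathbf p)\ge B$. Hence $f(P)=\max\{R(\mathbf p):(\mathbf p,\mathbf p_H)\in S,\ G\le P\}$ while $g(B)=\min\{G(\mathbf p,\mathbf p_H):(\mathbf p,\mathbf p_H)\in S,\ R(\mathbf p)\ge B\}$, a matched pair.

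Before inverting I would record the needed regularity. As optimal values of convex programs under right-hand-side perturbation, $f$ is concave in $P$ and $g$ is convex in $B$, so both are continuous on the interiors of their domains. Moreover $f$ is \emph{strictly} increasing: since $r_i=\frac12\log(1+\gamma_i p_i)$ is strictly increasing and unbounded in $p_i$, any slack in the budget at an optimum of $f(P)$ could be spent on the grid to raise some $p_i$ and strictly increase $R$, so (\ref{cnstr:Pave}) is always active and $f$ increases strictly with $P$. Symmetrically $g(B)=0$ for $B\le B_{\mathrm{EH}}:=f(0)$, the throughput reachable from harvested energy alone (i.e.\ with $p_i=p_{H,i}$), and $g$ is strictly increasing for $B\ge B_{\mathrm{EH}}$. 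Thus $f:[0,\infty)\to[B_{\mathrm{EH}},\infty)$ and $g:[B_{\mathrm{EH}},\infty)\to[0,\infty)$ are continuous strictly increasing bijections, and it remains to check $g(f(P))=P$ and $f(g(B))=B$ on these ranges.

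For the first identity, let $(\mathbf p^\ast,\mathbf p_H^\ast)$ attain $f(P)$ and put $B^\ast=f(P)=R(\mathbf p^\ast)$; by the argument above $G(\mathbf p^\ast,\mathbf p_H^\ast)=P$. This point lies in $S$ and has throughput $B^\ast$, hence it is feasible for the $g(B^\ast)$ problem, giving $g(f(P))\le P$. If the inequality were strict, a minimizer $(\tilde{\mathbf p},\tilde{\mathbf p}_H)$ of $g(B^\ast)$ would have grid power below $P$ and $R(\tilde{\mathbf p})\ge B^\ast$; raising one $\tilde p_i$ and charging the increment entirely to the grid --- which leaves every $\tilde p_{H,i}$, and therefore (\ref{cnstr:causal})--(\ref{cnstr:Ecap}), untouched, and keeps us within the budget $P$ --- yields a point feasible for $f(P)$ with throughput exceeding $B^\ast$, contradicting $f(P)=B^\ast$. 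So $g(f(P))=P$. The mirror argument --- start from a minimizer of $g(B)$, observe that $R\ge B$ must be tight there (otherwise scale all $p_i$ down slightly, which lowers $G$), and rule out a point feasible for $f(g(B))$ with throughput above $B$ --- gives $f(g(B))=B$. Together these are exactly $f=g^{-1}$.

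I expect the two tightness/perturbation steps to be the real work: showing that (\ref{cnstr:Pave}) binds at every optimum of $f$, that $R\ge B$ binds at every optimum of $g$, and that a small grid surplus can always be turned into strictly larger throughput without leaving $S$. The non-uniqueness of $p_{H,i}$ noted after (\ref{opt:pran2}) is a mild nuisance here --- one must pick the minimizer for which the clean perturbation (raise $p_i$ through the grid, freeze $p_{H,i}$) is available, which the construction above does. I would also state explicitly that for $B<B_{\mathrm{EH}}$ the claim is vacuous on the $g$-side --- this is precisely the regime in which grid power is unnecessary --- so ``$f=g^{-1}$'' is to be read as the bijection between $[0,\infty)$ and $[B_{\mathrm{EH}},\infty)$. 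A slicker alternative I would flag is to bypass perturbations by comparing the two KKT systems: the multiplier $\eta_1$ of the throughput constraint in the $g$-problem and the multiplier of (\ref{cnstr:Pave}) in the $f$-problem are reciprocal up to known constants, so the optima of both problems obey the same stationarity relation (\ref{opt:pran2}) and thus trace one and the same curve.
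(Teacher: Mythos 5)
Your proof is correct and follows essentially the same route as the paper's: establish (strict) monotonicity of $f$ and $g$, pass an optimizer of one problem into the feasible set of the other, and rule out any gap by a perturbation/contradiction argument (the paper phrases this as ``if $P_{\mathrm{min}}>P$ then $f(P_{\mathrm{min}})>f(P)$, contradiction,'' and symmetrically for $P_{\mathrm{min}}<P$). You are in fact more careful than the paper, which asserts monotonicity in one line and never addresses the regime $B\le B_{\mathrm{EH}}$ where $g\equiv 0$ fails to be injective; your explicit restriction of the bijection to $[0,\infty)\leftrightarrow[B_{\mathrm{EH}},\infty)$ repairs a real imprecision in the stated claim $f=g^{-1}$.
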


\begin{proof}
First of all, the functions $f$ and $g$ are monotonic increasing since the $\log$ functions are monotonically increasing and concave.

Then, we set $B_{\mathrm{max}}=f(P),
P_{\mathrm{min}}=g(B_{\mathrm{max}})$. If $P_{\mathrm{min}} > P$, we
have $f(P_{\mathrm{min}}) > f(P) = B_{\mathrm{max}}$. Hence, $g(B_{\mathrm{max}}) < P_{\mathrm{min}}$, as with $P_{\mathrm{min}}$, we can achieve higher throughput $f(P_{\mathrm{min}})$ by the optimal policy. It
contradicts the assumption. The same result holds for
$P_{\mathrm{min}} < P$. As a result, $g(B_{\mathrm{max}}) =
P_{\mathrm{min}} = P = f^{-1}(B_{\mathrm{max}})$. %\hfill $\Box$
\end{proof}

{Proposition \ref{prop:noRandom}} shows that if all the packets are already at the transmitter before the transmission starts, the problem (\ref{opt:pmin}) is a dual problem of (\ref{opt:thrmax}). We now discuss the structure of the optimal power allocation of problem (\ref{opt:thrmax}), and our grid power minimization problem can be solved in the same way due to dual property. In fact, the problem (\ref{opt:thrmax}) can be considered as a simple extension of \cite{Ozel:2011} by adding an additional average grid power constraint. Hence, it can be solved similarly by the directional WF algorithm. In particular, the harvested energy in each frame can only be transferred to the future frames, and the amount that can be transferred is limited by the battery capacity. It is realized by introducing the concept of right permeable tap \cite{Ozel:2011}. The difference in our problem is that there is additional grid energy, which however, can be considered as an additional battery with total amount of energy $NP_{G,\mathrm{ave}}T_f$. As there are no causality and capacity constraints, the grid energy allocation can be viewed as the directional WF from frame 1 with right permeable tap which allows any amount of energy transfer, which is in fact the traditional WF algorithm \cite{Gold:2005}.

In order to emphasize the difference between harvested energy and grid energy as well as motivate the algorithm design for random packet arrival case, we propose the \emph{two-stage WF} optimal power allocation policy. The procedure is divided into two stages. In the first stage, the system operates directional WF
\cite{Ozel:2011} to allocate the harvested energy. At the beginning of
the second stage, the surface of the water freezes up. Then the
traditional WF \cite{Gold:2005} is performed to allocate the power grid energy
on the ice surface, until the energy is used up.

\begin{figure}
\centering
\includegraphics[width=4.5in]{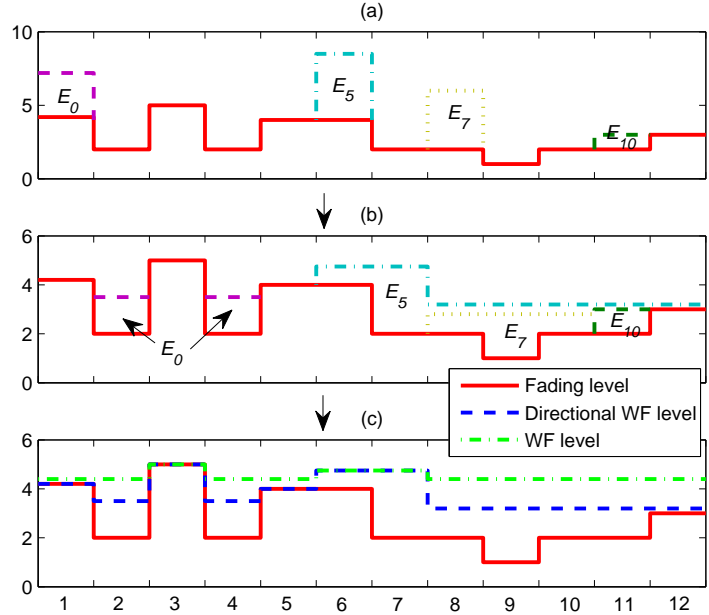}
\caption{Example of two-stage WF operation. (a) shows the harvested energy arrival; (b) shows the procedure and the result of directional WF; and (c) shows the two-stage WF result.} \label{fig:WFex}
\end{figure}

An example is shown in Fig. \ref{fig:WFex}. We illustrate a total of $N=12$ frames. The initial
battery energy is $E_0$, and only the values of harvested energy $E_5, E_7$ and $E_{10}$ is non-zero. By performing the two-stage WF
algorithm, the optimal power allocation is obtained. It is observed
from the figure that a constant water level is achieved except for frames 3, 6 and 7. The power used in
frame 3 is zero because of deep fading. Due to the battery capacity
constraint, frames 6 and 7 have higher water level to avoid energy
overflow. Also, the power allocation of harvested energy and power
grid energy is variable under the water level. For instance, the
harvested energy allocated in frame 2 can be partially reserved to
be used in frame 4, and the water level in frame 2 is achieved by
power grid energy.

\subsection{Random Packet Arrival}
Now assume the packets arrive in every frame. In this condition, we first analyze the structure of the optimal power allocation under the different assumptions on the energy. Then based on the observations, we discuss the power allocation for the general case.

\begin{proposition} \label{prop:noEH}
If $E_i = 0$ for all $i = 0, 1, \ldots, N-1$, the optimal water
levels are monotonically non-decreasing as $\nu_{k} \le \nu_{k+1}$.
The water level may increase only when (\ref{constr:RanBit1}) is
satisfied with equality.
\end{proposition}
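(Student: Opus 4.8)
\medskip
\noindent\textbf{Proof proposal.} The plan is to notice that shutting off the harvested energy makes the battery path inert, so that the transmit power is pure grid power and only the water-level expression (\ref{wl:1}) is operative; the monotonicity can then be read straight off the sign of the multipliers $\eta_k$. I expect the argument to be short.

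First I would show that the battery is inert: with $E_i=0$ for all $i=0,\ldots,N-1$, the causality constraint (\ref{cnstr:causal}) reads $\sum_{i=1}^{k}T_fp_{H,i}\le 0$ for every $k$; taking $k=1$ with the non-negativity constraint (\ref{cnstr:pH}) forces $p_{H,1}=0$, and an easy induction on $k$ gives $p_{H,i}^{*}=0$ for all $i$ in any feasible, hence in the optimal, solution, so the entire transmit power is grid power. I would then identify the water level with (\ref{wl:1}): for any frame with $p_i^{*}>0$ we have $p_{H,i}^{*}=0<p_i^{*}$, the constraint $p_{H,i}\le p_i$ in (\ref{cnstr:pG}) is slack, and complementary slackness (\ref{comp:pG}) gives $\alpha_i=0$, so (\ref{opt:pran1}) reduces to $p_i^{*}=\sum_{k=1}^{i}\eta_k-\frac{1}{\gamma_i}$; for a deep-fade frame with $p_i^{*}=0$, equation (\ref{opt:pran1}) together with $\alpha_i\ge 0$ forces $\sum_{k=1}^{i}\eta_k\le\frac{1}{\gamma_i}$, so $\big[\sum_{k=1}^{i}\eta_k-\frac{1}{\gamma_i}\big]^{+}=0=p_i^{*}$ as well. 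Hence in every frame $p_i^{*}=\big[\nu_i-\frac{1}{\gamma_i}\big]^{+}$ with the single water level $\nu_i=\sum_{k=1}^{i}\eta_k$.

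Monotonicity is then immediate, $\nu_{k+1}=\nu_k+\eta_{k+1}\ge\nu_k$, since $\eta_{k+1}\ge 0$. For the second assertion, if $\nu_{k+1}>\nu_k$ then $\eta_{k+1}>0$, so by complementary slackness (\ref{comp:Bits}) the data constraint (\ref{constr:RanBit}) is tight at index $k+1$, i.e.\ $\sum_{i=k+1}^{N}\frac{T_f}{2}\log_2(1+\gamma_ip_i^{*})=\sum_{i=k}^{N-1}B_i$. Subtracting this from the $k=1$ instance of (\ref{constr:RanBit}), which holds with equality at the optimum (as noted right after (\ref{constr:RanBit}), since otherwise some $p_i$ could be reduced), yields $\sum_{i=1}^{k}\frac{T_f}{2}\log_2(1+\gamma_ip_i^{*})=\sum_{i=0}^{k-1}B_i$, i.e.\ (\ref{constr:RanBit1}) holds with equality at index $k$: the transmit buffer is emptied at frame $k$. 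So the water level can rise across the boundary $k$ only when (\ref{constr:RanBit1}) is met there with equality.

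The step I expect to need the most care is the water-level identification in the deep-fade frames: there $p_i^{*}=p_{H,i}^{*}=0$ leaves $\alpha_i$ (and the multipliers $\lambda_k,\mu_k,\beta_i$ that feed (\ref{wl:2})) not pinned down by complementary slackness, so one must check -- as in the computation above -- that the choice $\nu_i=\sum_{k=1}^{i}\eta_k$ stays consistent with (\ref{opt:p}). Everything else is routine bookkeeping with the KKT conditions (\ref{comp:causal})--(\ref{comp:pH}).
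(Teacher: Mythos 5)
Your proof is correct and follows essentially the same route as the paper's: both identify the water level with (\ref{wl:1}) once harvesting is removed, read monotonicity off $\eta_{k+1}\ge 0$, and link a water-level increase to tightness of (\ref{constr:RanBit1}) via complementary slackness (\ref{comp:Bits}) combined with the total-bits equality (the paper argues the contrapositive, you argue the direct implication, which is equivalent). Your added details --- deducing $p_{H,i}^{*}=0$ from (\ref{cnstr:causal}) and (\ref{cnstr:pH}), and checking consistency of (\ref{opt:pran1}) in deep-fade frames --- merely make explicit steps the paper leaves implicit.
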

%\noindent \textbf{Proof:} See Appendix \ref{app:noEH} \hfill $\Box$
\begin{proof}
Without the energy harvesting, the water
levels are expressed as  (\ref{wl:1}). Since $\eta_{k+1} \ge 0$,
it naturally results in $\nu_{k} \le \nu_{k+1}$. In addition, if
(\ref{constr:RanBit1}) is satisfied without equality, i.e.,
\begin{equation}
\sum_{i=1}^k \frac{T_f}{2} \log_2(1 + \gamma_ip_i) < \sum_{i=0}^{k-1}
B_i,
\end{equation}
from  (\ref{constr:RanBit2}) we get
\begin{equation}
\sum_{i=k+1}^N \frac{T_f}{2} \log_2(1 + \gamma_ip_i) >
\sum_{i=k}^{N-1} B_i.
\end{equation}
Then from (\ref{comp:Bits}) we have $\eta_{k+1} = 0$, and hence
$\nu_{k} = \nu_{k+1}$. %\hfill $\Box$
\end{proof}

Note that the energy minimization by a deadline with grid power supply only is studied in \cite{Biyik:2002} for AWGN channel, where the rate-power function is time-invariant. Here, we consider the fading channel case, and conclude from {Proposition \ref{prop:noEH}} that if the power is supplied by the power grid only, the water level is
non-decreasing from frame to frame. It increases only at the points
where
\begin{equation}
\sum_{i=1}^k \frac{T_f}{2} \log_2(1 + \gamma_ip_i) = \sum_{i=0}^{k-1}
B_i, \label{cond:equal}
\end{equation}
i.e., the already arrived packets are all transmitted by the end of
frame $k$. The optimal power allocation can be obtained by \emph{reverse multi-stage WF} algorithm, which allocates the power from the last frame to the first frame. First allocate the power to frame $N$ so that
(\ref{constr:RanBit}) for $k=N$ is satisfied with equality. Then
freeze up the water surface of frame $N$, and allocate the power to
frames $N-1$ and $N$ with the traditional WF policy on the ice surface so that
(\ref{constr:RanBit}) for $k=N-1$ is satisfied with equality. Repeat
the procedure until the first frame. We summarize it as
{Algorithm \ref{alg:revPA}}.

\begin{algorithm}[th]
\caption{Reverse multi-stage WF with grid power} \label{alg:revPA}
\begin{algorithmic}[1]

%\STATE Find $P_N$ so that $\log_2(1+\gamma_NP_N) = B_N.$
%
%\STATE Set $\gamma_N' = \frac{1}{P_N + \frac{1}{\gamma_N}}$

\STATE Set $\bar{\gamma}_i = \gamma_i, i = 1, \ldots, N.$

\FORALL{$k = N$ to $1$}

\STATE Find $\nu_k$ so that
\begin{equation}
\sum\limits_{i=k}^N \frac{T_f}{2}
\log_2(1 + \bar{\gamma}_ip_i) = {B}_{k-1}, p_i =
\left[\nu_k-\frac{1}{\bar{\gamma}_i}\right]^+. \label{eq:stageWF}
\end{equation}

\STATE Update $\bar{\gamma}_i$ as
\begin{equation}
\bar{\gamma}_i = \left(p_i + \frac{1}{\bar{\gamma}_i}\right)^{-1}, i
= k, \ldots, N. \label{gamma:update}
\end{equation}

\ENDFOR

\STATE $p_i^* = \frac{1}{\bar{\gamma}_i}- \frac{1}{{\gamma}_i}, i =
1, \ldots, N.$

\end{algorithmic}
\end{algorithm}

%The algorithm can also be explained by the right permeable tap introduced in \cite{Ozel:2011}. Different from the original concept, the right permeable tap here limits the number of bits that can be transferred to the future frames. In addition, as the required power per bit is variant among frames, the behaviour of bit transfer is interpreted as ``air spreading" rather than ``water flowing". Specifically, the arrived bits in each frame are considered as a quantity of sealed air, and the required power is viewed as the volume of the air under a specific pressure and temperature condition (determined by the fading level. To reflect the log relation between rate and power, it is also varied depending on the quantity of the air). When the right permeable tap is turned on, the air spreads rightwards, and the volume of transferred air changes under the new pressure and temperature.

In the algorithm, for a given $k$, $\bar{\gamma}_i$ denotes the freezed water level updated in the previous stage $i-1, i = k+1, \ldots N$, and $\bar{\gamma}_k = {\gamma}_k$ as was set in step 1. Hence, (\ref{eq:stageWF}) means that in this stage, $B_{k-1}$ bits arrived at the beginning of frame $k$ are served by the traditional water filling algorithm in the following frames (from $k$ to $N$). Fig. \ref{fig:revPAex} demonstrates an example of reverse multi-stage WF with grid power. It can be seen the arrived bits $B_2$ and $B_3$ are served through frames 3-5, resulting in the same water level in these frames. The rest bits are served in their arriving frames respectively.

\begin{figure}
\centering
\includegraphics[width=4.5in]{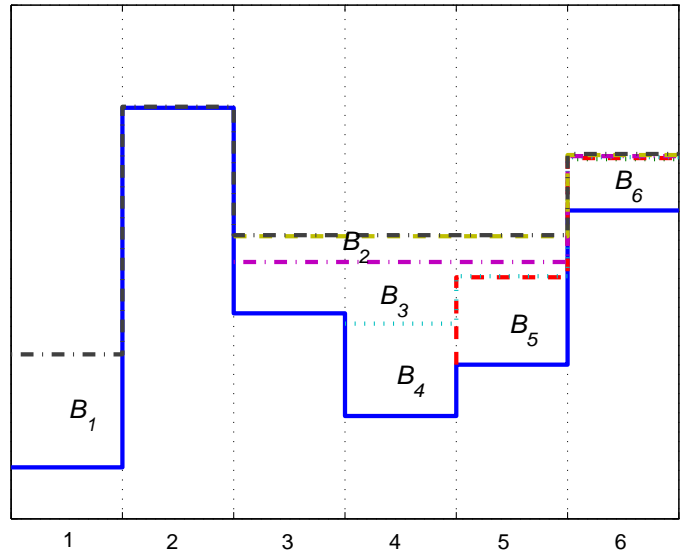}
\caption{Example of reverse multi-stage WF with grid power. Starting from the last frame to the first one, $B_i$ is served in frames $i, i+1, \ldots, N$ with traditional WF scheme.} \label{fig:revPAex}
\end{figure}

\begin{proposition} \label{prop:noEmax}
If $E_{\mathrm{max}} = \infty$, $p_i^* = p_{H,i}^*$ for all $i =
1, 2, \ldots, N$, and $p_{H,i}^*$ is unique (i.e., any packet arrival set $\{B_0', \ldots, B_{N-1}'\}$ satisfying $B_i' \ge B_i$, and $\sum_{i=0}^{N-1}B_i' > \sum_{i=0}^{N-1}B_i$ results in $\sum_{i=1}^{N}T_f(p_i^* -p_{H,i}^*) > 0$), the optimal water levels are monotonically
non-decreasing as $\nu_{k} \le \nu_{k+1}$. The water level may
increase when either (\ref{cnstr:causal}) or (\ref{constr:RanBit1})
is satisfied with equality.
\end{proposition}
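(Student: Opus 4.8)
The plan is to follow the proof of Proposition \ref{prop:noEH}, but with the second water‑level expression (\ref{wl:2}) in place of the first. First I would observe that $E_{\mathrm{max}}=\infty$ makes the battery‑capacity constraint (\ref{cnstr:Ecap}) strictly slack, so complementary slackness (\ref{comp:Ecap}) forces $\mu_k=0$ for every $k$. The hypothesis $p_i^*=p_{H,i}^*$ says that no grid power is used, so at every frame with $p_i^*>0$ we have $p_{H,i}^*=p_i^*>0$, whence (\ref{comp:pH}) gives $\beta_i=0$ and the applicable stationarity branch is (\ref{opt:pran2})/(\ref{wl:2}). With $\mu_k=0$ and $\beta_i=0$ this reads
\[
\nu_i=\frac{\sum_{k=1}^{i}\eta_k}{\sum_{k=i}^{N}\lambda_k}.
\]
The uniqueness hypothesis is what lets me assert that this KKT quantity genuinely describes the optimal allocation: it pins us to case (c) of Section \ref{sec:gpmin}, excluding the slack regime of case (a) in which the harvested energy could be spread arbitrarily among the frames and no monotone structure would be forced.

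The monotonicity then falls out of the formula. Writing $H_i=\sum_{k=1}^{i}\eta_k$ and $\Lambda_i=\sum_{k=i}^{N}\lambda_k$, so that $\nu_i=H_i/\Lambda_i$, we have $H_{k+1}\ge H_k$ because $\eta_{k+1}\ge 0$ and $\Lambda_{k+1}\le\Lambda_k$ because $\lambda_k\ge 0$; as both quantities are positive, $\nu_{k+1}=H_{k+1}/\Lambda_{k+1}\ge H_k/\Lambda_k=\nu_k$. Deep‑fade frames, where $p_i^*=0$ and $\nu_i\le 1/\gamma_i$, do not interfere: they can be set aside exactly as in the $\bar\gamma$ update of Algorithm \ref{alg:revPA}, and the water surface over the remaining frames stays non‑decreasing. (The same conclusion also follows from an exchange argument: were $\nu_{k+1}<\nu_k$, moving a little harvested energy from frame $k$ to frame $k+1$ would strictly reduce total power without violating causality or the buffer constraints, contradicting optimality; I would present the formula‑based version to match Proposition \ref{prop:noEH}.)

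For the location of the increases, suppose $\nu_{k+1}>\nu_k$. Since $H_{k+1}\ge H_k$ and $\Lambda_{k+1}\le\Lambda_k$, at least one of these must be strict, i.e. $\eta_{k+1}>0$ or $\lambda_k>0$. If $\lambda_k>0$, then (\ref{comp:causal}) yields $\sum_{i=1}^{k}T_fp_{H,i}^*=\sum_{i=0}^{k-1}E_i$, i.e. (\ref{cnstr:causal}) holds with equality at index $k$. If $\eta_{k+1}>0$, then (\ref{comp:Bits}) yields $\sum_{i=k+1}^{N}\frac{T_f}{2}\log_2(1+\gamma_ip_i^*)=\sum_{i=k}^{N-1}B_i$, which together with (\ref{constr:RanBit2}) gives $\sum_{i=1}^{k}\frac{T_f}{2}\log_2(1+\gamma_ip_i^*)=\sum_{i=0}^{k-1}B_i$, i.e. (\ref{constr:RanBit1}) holds with equality at index $k$. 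Hence the water level can rise only where the energy‑causality or the data‑buffer constraint is tight, as claimed.

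The main obstacle, as in Proposition \ref{prop:noEH}, is not the manipulation above but the legitimacy of invoking branch (\ref{wl:2}) with a strictly positive, finite denominator $\Lambda_i$, and of discarding branch (\ref{wl:1}) (which corresponds to $\alpha_i=0$, $p_{H,i}^*<p_i^*$, hence grid power in frame $i$, excluded by hypothesis). This is exactly where the uniqueness assumption does its work, and I would spell it out: if some $\Lambda_i$ vanished or the optimizer were not unique, the harvested‑energy schedule could be perturbed so as to carry strictly more bits with still‑zero grid power, contradicting the uniqueness stated in the proposition. Everything else is a direct transcription of the Proposition \ref{prop:noEH} argument with $\mu_k\equiv 0$.
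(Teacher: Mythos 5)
Your proof is correct and follows essentially the same route as the paper's: both reduce to the water level $\nu_k=\sum_{j\le k}\eta_j/\sum_{j\ge k}\lambda_j$ by killing $\mu_k$ (infinite capacity) and selecting the harvested-energy branch, then read off monotonicity from the signs of the multipliers and locate the increases via the complementary slackness conditions (\ref{comp:causal}) and (\ref{comp:Bits}) combined with (\ref{constr:RanBit2}). The only difference is that you explicitly justify steps the paper asserts without comment (why branch (\ref{wl:2}) applies, why the denominator is positive, the role of the uniqueness hypothesis), which is a welcome tightening rather than a different argument.
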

%\noindent \textbf{Proof:} See Appendix \ref{app:noEmax}. \hfill $\Box$
\begin{proof}
Under the conditions that the energy
battery has infinite capacity and there is a unique harvested power allocation policy to complete the transmission without grid power input, the water level is determined by
\begin{equation}
\nu_k = \frac{\sum_{j=1}^{k}\eta_j}{\sum_{j=k}^{N}\lambda_j}.
\label{wl:noEmax}
\end{equation}
Since $\eta_{k+1} \ge 0, \lambda_{k} \ge 0$, we have $\nu_{k} \le
\nu_{k+1}$.

The water level increases when either $\eta_{k+1} > 0$ or
$\lambda_{k} > 0$. Based on (\ref{comp:causal}), $\lambda_{k} > 0$
indicates that (\ref{cnstr:causal}) is satisfied with equality.
Based on (\ref{constr:RanBit2}) and (\ref{comp:Bits}), $\eta_{k+1} >
0$ indicates that (\ref{constr:RanBit1}) is satisfied with equality.
%\hfill $\Box$
\end{proof}

Proposition \ref{prop:noEmax} shows the structure of optimal harvested power allocation under infinite battery capacity constraint. The conclusion is an extension of the optimal power allocation in non-fading scenario ({Lemma 6} in \cite{Jing:2012}) to the fading scenario. The intuition is that due to the causality of energy and data arrival, the later frames can ``see" more available energy to be used and data bits to be transmitted than the earlier frames as the energy and data can be transferred rightwards, rather than leftwards. Hence, the water level is non-decreasing. Without the battery capacity constraint, the harvested energy can be
reserved for future use in any way as one wishes, and the arrived
packets can also be transmitted in future frames. In addition, the water level
changes when either all the packets arrived before the current frame
are transmitted or all the energy harvested before the current frame
are used up. If both the packet buffer and the energy battery are
not empty, the water level keeps constant.

\begin{proposition} \label{prop:ConoEmax}
If $E_{\mathrm{max}} = \infty$, the optimal water levels are
monotonically non-decreasing as $\nu_{k} \le \nu_{k+1}$. In
addition, %by conservatively using the harvested energy,
there is an optimal policy where a
unique $\bar{k}$ exists such that for $k \le \bar{k}$, $\nu_k$ is expressed
as (\ref{wl:1}) and for $k
> \bar{k}$, $\nu_k$ is expressed as (\ref{wl:noEmax}).
\end{proposition}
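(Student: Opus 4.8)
The strategy is to exploit the nonuniqueness of the battery/grid split noted just after (\ref{wl:2}): as the paper observes after (\ref{opt:pran2}), the optimal total powers $p_i^*$ (hence the water levels $\nu_i$) are unique, and the optimal objective value $G:=\min\sum_{i=1}^{N}T_f(p_i^*-p_{H,i}^*)$ is of course the same for every optimal solution, so only the decomposition $p_i^*=p_{H,i}^*+p_{G,i}^*$ is free. The plan is to \emph{construct} one particular optimal policy in which all the grid energy is pushed into the earliest frames, and then read off the form of $\nu_k$ in each regime from the KKT conditions. I treat the nondegenerate case $G>0$ (grid genuinely used); for $G=0$ the situation is degenerate -- all frames are powered purely by harvested energy, $\bar k=0$, and the claim reduces to the monotonicity of (\ref{wl:noEmax}), visible by inspection.

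First I would set up the construction: holding each $p_i^*$ fixed, define greedily $T_f p_{G,i}^*=\min\{T_f p_i^*,\ G-\sum_{j<i}T_f p_{G,j}^*\}$ for $i=1,2,\dots$, and $p_{H,i}^*=p_i^*-p_{G,i}^*$. The grid budget runs out inside the frame $\bar k:=\min\{k:\sum_{i=1}^{k}T_f p_i^*\ge G\}$, which is uniquely determined by $G$ and $\{p_i^*\}$; grid power is allocated only within frames $1,\dots,\bar k$, while frames $\bar k+1,\dots,N$ are powered purely by harvested energy. The point of this construction is that, measured against any starting optimal solution, it only \emph{defers} harvested-energy consumption (using $p_{H,i}^*=0$ on the early frames and $p_{H,i}^*=p_i^*$ on the late ones), so $\sum_{i=1}^{k}T_f p_{H,i}^*$ is not larger for any $k$ and the energy-causality constraint (\ref{cnstr:causal}) still holds; the overflow constraint (\ref{cnstr:Ecap}) is vacuous because $E_{\mathrm{max}}=\infty$ -- this is precisely where the hypothesis is used, since it is what allows unlimited deferral; the data constraints (\ref{constr:RanBit}) are untouched because the $p_i^*$ are; and the objective value is still $G$. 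Hence the constructed policy is feasible with the optimal value, i.e.\ optimal.

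Next I would read off the forms. For $k\le\bar k$ grid power is used, so $p_{H,k}^*<p_k^*$, and by the complementary slackness condition (\ref{comp:pG}) this forces $\alpha_k=0$; then (\ref{opt:pran1}) gives $\nu_k=\sum_{j=1}^{k}\eta_j$, i.e.\ form (\ref{wl:1}). For $k>\bar k$ we have $p_{H,k}^*=p_k^*$, and because $E_{\mathrm{max}}=\infty$ makes (\ref{cnstr:Ecap}) inactive so that $\mu_k=0$ for all $k$, (\ref{opt:pran2}) collapses to $\nu_k=\big(\sum_{j=1}^{k}\eta_j\big)/\big(\sum_{j=k}^{N}\lambda_j\big)$, i.e.\ form (\ref{wl:noEmax}). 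For monotonicity $\nu_k\le\nu_{k+1}$ I would give a direct exchange argument independent of the threshold: if $\nu_k>\nu_{k+1}$, shifting a small amount of power from frame $k$ to frame $k+1$ -- routed through whichever of the two energy sources is positive in frame $k$ -- preserves feasibility, since (\ref{cnstr:causal}) only improves (harvested energy may always be used later), the $\ge$ constraints (\ref{constr:RanBit}) only have their left-hand sides increase, and (\ref{cnstr:Ecap}) is vacuous, while by concavity $\log(1+\gamma_k p_k)+\log(1+\gamma_{k+1}p_{k+1})$ strictly increases (its derivative in the shift is $-1/\nu_k+1/\nu_{k+1}>0$); the resulting slack in the $k=1$ instance of (\ref{constr:RanBit}) then lets us shave grid energy and strictly improve the objective, contradicting optimality. (Within each of the two regimes monotonicity is in any case immediate, exactly as in Propositions \ref{prop:noEH} and \ref{prop:noEmax}: $\sum_{j\le k}\eta_j$ is non-decreasing, and $\big(\sum_{j\le k}\eta_j\big)/\big(\sum_{j\ge k}\lambda_j\big)$ is a non-decreasing numerator over a non-increasing positive denominator.)

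The part I expect to be the main obstacle is the bookkeeping that makes the feasibility step and the boundary behaviour watertight. One has to phrase ``deferral cannot violate causality'' in terms of the cumulative sums $\sum_{i=1}^{k}T_f p_{H,i}^*$ rather than frame by frame, and it is exactly at this point that $E_{\mathrm{max}}=\infty$ is indispensable. The other delicate point is the junction frame $\bar k$: generically $0<p_{H,\bar k}^*<p_{\bar k}^*$, so (\ref{opt:pran1}) and (\ref{opt:pran2}) must coincide there and pin $\sum_{j=\bar k}^{N}\lambda_j=1$; in the nongeneric case where the grid budget ends exactly at a frame boundary, frame $\bar k$ carries only grid power (with $\beta_{\bar k}$ possibly positive), and one simply invokes the already-proved monotonicity to splice the two regimes. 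Deep-fade frames, where $p_k^*=0$ and $\nu_k<1/\gamma_k$ so the (\ref{wl:1})/(\ref{wl:noEmax}) distinction carries no content, are only a routine technicality to dispose of.
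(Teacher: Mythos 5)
Your proof is correct and rests on the same key idea as the paper's: exploit the non-uniqueness of the battery/grid split to exhibit one optimal policy in which grid energy is front-loaded and harvested energy is deferred, which is exactly where $E_{\mathrm{max}}=\infty$ enters. The execution differs in two places. First, the paper eliminates the ``bad'' configuration ($p_k^*=p_{H,k}^*$ followed by $p_{k+1}^*>p_{H,k+1}^*$, its case (d)) by a local pairwise swap of harvested and grid energy between consecutive frames, and then asserts the global threshold structure; your global greedy construction $T_fp_{G,i}^*=\min\{T_fp_i^*,\,G-\sum_{j<i}T_fp_{G,j}^*\}$ reaches the same policy in one step, and your observation that it minimizes every prefix sum $\sum_{i=1}^{k}T_fp_{H,i}^*$ is a cleaner and more explicit verification of (\ref{cnstr:causal}) than the paper's ``by conservatively using the harvested energy, case (d) does not exist.'' Second, for monotonicity the paper argues purely from the signs of the multipliers in (\ref{wl:1}) and (\ref{wl:noEmax}) across its cases (a)--(c) (the junction being handled by $\nu_{k}=\sum_{j\le k}\eta_j\le\sum_{j\le k+1}\eta_j/(1-\alpha_{k+1})=\nu_{k+1}$), whereas your primary route is a primal power-shifting perturbation; that perturbation has a small loose end in the ``shave grid energy'' step (reducing $p_{G,j}$ lowers the left-hand side of every constraint (\ref{constr:RanBit}) with index $\le j$, some of which may be tight even though the $k=1$ instance has slack), but your parenthetical fallback --- non-decreasing numerator over non-increasing positive denominator within each regime, plus the junction inequality --- is exactly the paper's argument and closes it. Your handling of the degenerate cases ($G=0$, deep-fade frames, grid budget ending on a frame boundary) is at least as careful as the paper's, which does not discuss them.
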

%\noindent \textbf{Proof:} See Appendix \ref{app:ConoEmax}. \hfill $\Box$
\begin{proof}
We prove the proposition by studying all the cases
of power allocation between energy harvesting and power grid in two
consecutive frames.

(a) If $p_k^* > p_{H,k}^{*}$ and $p_{k+1}^* > p_{H,k+1}^{*}$, the water
levels of frame $k$ and $k+1$ have the same expression as
(\ref{wl:1}), hence $\nu_{k} \le \nu_{k+1}$.

(b) If $p_k^* = p_{H,k}^{*}$ and $p_{k+1}^* = p_{H,k+1}^{*}$, the water
levels are expressed the same as (\ref{wl:noEmax}). We also obtain
the monotonicity.

(c) If $p_k^* > p_{H,k}^{*}$ and $p_{k+1}^* = p_{H,k+1}^{*}$, we have
$\alpha_k = 0, \alpha_{k+1} \ge 0$. As a result, $\nu_k = \sum_{j =
1}^k\eta_j \le \sum_{j = 1}^{k+1}\eta_j \le \sum_{j =
1}^{k+1}\eta_j/(1-\alpha_{k+1}) = \nu_{k+1}$.

(d) If $p_k^* = p_{H,k}^{*}$ and $p_{k+1}^* > p_{H,k+1}^{*}$, we can
exchange the amount of harvested energy $T_f\min\{p_{H,k}^{*},$ $
p_{k+1}^* - p_{H,k+1}^{*}\}$ of frame $k$ with the same amount of
grid energy of frame $k+1$ to achieve the same optimal result,
without conflicting any constraint or changing any water level. Consequently, this case is
converted to either (a) or (c), and the monotonicity holds as well.

To sum up, $\nu_{k} \le \nu_{k+1}$ holds for all $k$. In addition,
by conservatively using the harvested energy, the case (d) is
converted to (a) or (c), hence does not exist. Depending on (a)-(c),
there is a unique $\bar{k}$ such that for $k < \bar{k}$, we have
$p_k^* > p_{H,k}^{*}$ and $p_{k+1}^* > p_{H,k+1}^{*}$, for $k >
\bar{k}$, we have $p_k^* = p_{H,k}^{*}$ and $p_{k+1}^* = p_{H,k+1}^{*}$,
and for $k = \bar{k}$, we have $p_k^* > p_{H,k}^{*}$ and $p_{k+1}^* =
p_{H,k+1}^{*}$. Equivalently, for $k \le \bar{k}$, $\nu_k$ is
expressed as (\ref{wl:1}) and for $k > \bar{k}$, $\nu_k$ is
expressed as (\ref{wl:noEmax}). %\hfill $\Box$
\end{proof}

Proposition \ref{prop:ConoEmax} presents an analytically tractable structure of the optimal power allocation for both energy harvesting and power grid. Intuitively, the harvested energy is reserved in the battery for the use in the later frames, in order to reduce the effect of causality constraint and improve the flexibility of harvested power allocation. Hence, the harvested power and the grid power are allocated one by one with the structures described in Proposition \ref{prop:noEH} and \ref{prop:noEmax}, respectively.

The optimal water level can be obtained by the power allocation policy which avoids case (d) in the proof of
{Proposition \ref{prop:ConoEmax}}. It is structured as
follows: The water level is non-decreasing and the harvested energy
is used in a conservative way. That is, the energy harvested before
frame $\bar{k}$ is saved for future use so that the required
transmission after frame $\bar{k}$ can be satisfied by harvested
energy only. To minimize the power grid energy, the harvested energy should be
allocated to maximize the achievable throughput. According to
{Propositions \ref{prop:noEmax}} and
{\ref{prop:ConoEmax}}, the optimal harvested energy
allocation takes the similar reverse calculation structure as
{Algorithm \ref{alg:revPA}}. The difference is that we should
consider the causality of energy arrival. As the harvested energy
cannot be used before its arrival, it should be fully utilized
after arrival. Specifically, denote the available power as $p_{E,N} = E_{N-1}/T_f$, and the required power as $p_{B,N} =
(2^{2B_{N-1}/T_f}-1)/\gamma_N$. If $p_{E,N} < p_{B,N}$, i.e., the
harvested energy in frame $N$ is not enough to transmit $B_{N-1}$
bits, we ``move" the energy from the previous frames until $B_{N-1}$
bits can be transmitted. If $p_{E,N} > p_{B,N}$, i.e., $B_{N-1}$ bits
transmission does not use up the harvested energy $E_{N-1}$, we move
the packets from the previous frames until $E_{N-1}$ is used up.
Then we freeze up the water surface and allocate power to the frames
$N-1$ and $N$ with the traditional WF policy to achieve the same result, i.e., either the bits arrived in this frame are served or the energy arrived is used up. Repeat the
procedure until the harvested energy is used up. It is detailed as {Algorithm
\ref{alg:revPAEH}}.

\begin{algorithm}[th]
\caption{Reverse multi-stage WF with harvested energy}
\label{alg:revPAEH}
\begin{algorithmic}[1]

\STATE Set $\bar{\gamma}_i = \gamma_i, \bar{B}_i = B_i, \bar{E}_i =
E_i, \forall i.$

\FORALL{$k = N$ to $1$}

\STATE Find $\nu_{E,k}$ so that $\sum\limits_{i=k}^NT_fp_{E,i} =
\bar{E}_{k-1}, p_{E,i} =
\left[\nu_{E,k}-\frac{1}{\bar{\gamma}_i}\right]^+.$

\STATE Find $\nu_{B,k}$ so that $\sum\limits_{i=k}^N \frac{T_f}{2}
\log_2(1 + \bar{\gamma}_ip_{B,i}) = \bar{B}_{k-1}, p_{B,i} =
\left[\nu_{B,k}-\frac{1}{\bar{\gamma}_i}\right]^+.$

%\STATE Find $\nu_k = \max\{ \nu_{E,k}, \nu_{B,k}\},$ where
%$\sum\limits_{i=k}^Np_i = E_{k-1}$ for $\nu_k = \nu_{E,k}$,
%$\sum\limits_{i=k}^N \frac{T_f}{2} \log_2(1 + \bar{\gamma}_ip_i) =
%\sum\limits_{i=k-1}^{N-1} \bar{B}_i$ for $\nu_k = \nu_{B,k}$, $p_i$ is
%expressed as (\ref{pi:bar}).

\IF{$\nu_{E,k} > \nu_{B,k}$}

\STATE Find $j^* = \max\left\{j \Big| \sum\limits_{i=j}^{k-1}
\bar{B}_i \ge \sum\limits_{i=k}^N \frac{T_f}{2} \log_2(1 +
\bar{\gamma}_ip_{E,i})\right\}.$

\STATE Update $\bar{\gamma}_i, i \ge k$ as (\ref{gamma:update}),
$p_i = p_{E,i}$, $\bar{B}_{j^*} = \sum\limits_{i=j^*}^{k-1} \bar{B}_i
- \sum\limits_{i=k}^N \frac{T_f}{2} \log_2(1 + \bar{\gamma}_ip_{E,i}),
\bar{B}_i = 0 , i > j^*.$

\ELSE

\STATE Find $j^* = \max\left\{j \Big| \sum\limits_{i=j}^{k-1}
\bar{E}_i \ge \sum\limits_{i=k}^NT_fp_{B,i}\right\}.$

\IF {$\{j^*\} = \emptyset$}

\STATE Find $\{p_i'\}_{i\ge k} = \arg\max\limits_{\sum_{i = k}^Np_i
\le \sum_{i=0}^{k-1} \bar{E}_i}\frac{T_f}{2} \log_2(1 +
\bar{\gamma}_ip_i).$

\STATE Update $\bar{\gamma}_i, i \ge k$ as (\ref{gamma:update}) with
$p_i = p_i'$, then break.

\ELSE

\STATE Update $\bar{\gamma}_i, i \ge k$ as (\ref{gamma:update}) with
$p_i = p_{B,i}$, $\bar{E}_{j^*} = \sum\limits_{i=j^*}^{k-1} \bar{E}_i
- \sum\limits_{i=k}^NT_fp_{B,i}, \bar{E}_i = 0 , i > j^*.$

\ENDIF

\ENDIF

\ENDFOR

\STATE $p_{H,i}^* = \frac{1}{\bar{\gamma}_i}- \frac{1}{{\gamma}_i}, i
= 1, \ldots, N.$

\end{algorithmic}
\end{algorithm}

Note that we assume the harvested energy is not enough to satisfy
the required transmission, a proper $j^*$ can always be found in
step 6, i.e., we can shift the arrived bits from previous frames to the current one to use up the arrived energy. On the contrary, $j^*$ in step 9 may not be found when $k \le \bar{k}$, and Algorithm \ref{alg:revPAEH} terminates. Then the water surface freezes up and the grid power is allocated using Algorithm \ref{alg:revPA} on the ice surface until all the bits are transmitted. Consequently, the optimal water level is found.

In summary, the power allocation for infinite battery capacity is done in two stages. In the first stage, harvested energy is allocated reversely from last frame by the reverse multi-stage WF algorithm with harvested energy. We call the stage in this algorithm as sub-stage. In each sub-stage, the to-be-transmitted bits form a required water level, and the to-be-used energy can achieve an available water level. If the required water level is higher than the available water level, energy from earlier frames are transferred rightwards. Otherwise, the bits from earlier frames are transferred. The sub-stage ends when the two water levels are equal. When the harvested power allocation stage is finished, the water surface freezes up and the grid energy is allocated in the second stage also in multiple sub-stages. The objective of each sub-stage is just to achieve the required water level.

%Depending on the previous analysis and algorithms, we can describe the optimal power allocation policy for the general case, i.e., random packets and energy arrival and finite battery capacity. The only difference with the infinite battery capacity case is that, the finite batter capacity constraint limits the amount of harvested energy that can be transferred to the future frames. Hence, the algorithm can be obtained from Algorithm \ref{alg:revPAEH} with some modifications. Specifically, the algorithm still operates in the multi-stage way. In each stage, we compare the maximum transmitting bits and the to-be-transmitted bits instead of the water levels as the achievable water level maybe not flat due to the battery capacity. If the maximum number of bits that can be transmitted by the energy available from the current frame is larger than the number of to-be-transmitted bits, the algorithm is the same, i.e., transferring the earlier arrived bits to achieve the maximum quantity. Otherwise, energy from earlier stage should be transferred until either the bit transmission is completed or the capacity constraint is violated. For the latter case, the remaining energy requirement have to be satisfied by the power grid.

However, in general case, if the battery capacity is finite, the monotonicity of the
optimal water level does not hold when the constraint
(\ref{cnstr:Ecap}) is satisfied with equality. The structure of the optimal power allocation cannot be described in a simple and clear way. Intuitively, in contrast to the case of infinite battery
capacity where the harvested energy is conservatively used to
achieve the optimality, in the finite capacity case, the power from
battery should be allocated first to minimize the energy waste. We
will propose some online algorithms based on the intuition.

\section{Online Power Allocation} \label{sec:online}

In this section, we assume that the transmitter only has the knowledge of the battery energy levels, the energy arrivals and the channel states of the current frame and all the past ones, but does not know those of the future. Besides, the statistics of energy arrival and fading are available to the transmitter. In this setup, some online power allocation algorithms are proposed.

\subsection{All packets ready before transmission}
If all the packets $B$ are already at the transmitter before
transmission, such as file transfer, the following algorithms are proposed:

\subsubsection{Constant water level}
Motivated by the offline optimal solution that the power allocation
tries to achieve a constant water level (by the traditional WF in the second stage), we can propose a constant
water level algorithm. Assume the constant water level is
$1/\gamma_0$, which is calculated by solving the following equation
\begin{equation}
\int_{\gamma_0}^{\infty}\frac{1}{2} \log_2 \left(
\frac{\gamma}{\gamma_0} \right) f(\gamma) d\gamma = \frac{B}{NT_f}.
\end{equation}
The power for each frame $i$ is calculated as
\begin{equation}
p_i = \left[\frac{1}{\gamma_0}-\frac{1}{\gamma_i}\right]^+.
\label{eqn:pi}
\end{equation}
If $T_fp_i \le E_{Q,i}$, where $E_{Q,i}$ denotes the battery energy at
the beginning of frame $i$, the required power is supplied by the
battery only, and the energy from power grid is zero, i.e, $p_i =
p_{H,i}$. Otherwise, the amount of energy $T_fp_i - E_{Q,i}$ will be
taken from power grid.

\subsubsection{Adaptive water level}
For the finite time transmission, the constant water level algorithm
is apparently not optimal due to the different realization of
channel gains. We propose the adaptive water level algorithm to
improve the performance. The water level is updated for each frame, denoted by $\gamma_{0,i}$, which can be obtained by solving
\begin{equation}
\int_{\gamma_{0,i}}^{\infty}\frac{1}{2} \log_2 \left(
\frac{\gamma}{\gamma_{0,i}} \right) f(\gamma) d\gamma =
\frac{B_{\mathrm{R},i}}{(N-i+1)T_f},
\end{equation}
where $B_{\mathrm{R},i}$ is the remaining bits at the beginning of
frame $i$.

Due to the randomness of energy arrival, there may be energy battery
overflow for online solutions. Hence in addition, we propose an \emph{energy overflow
protection} algorithm. If the expected battery energy exceeds the
battery capacity, i.e, $E_{Q,i} + P_{H,\mathrm{ave}}T_f >
E_{\mathrm{max}}$, a minimum power
\begin{equation}
p_{\mathrm{min}} = \frac{E_{Q,i} - E_{\mathrm{max}}}{T_f} +
P_{H,\mathrm{ave}} \label{p:min}
\end{equation}
must be used. Then the power allocated in frame $i$ is determined as
\begin{equation}
\tilde{p}_i = \max\{p_i, p_{\mathrm{min}}\}, \label{eqn:poverflow}
\end{equation}
where $p_i$ is expressed as (\ref{eqn:pi}). In this way, a certain amount of power must be used even when the channel is in deep fading or the water level exceeds the optimal one. This algorithm well meets the battery
capacity constraint from the average point of view, and is expected to
improve the performance, although the battery overflow can not be
completely avoided.

Notice that all the packets need to be delivered in $N$ frames. Hence, in the last frame, all the remaining packets should be transmitted as long as the required transmit power is achievable. Specifically, the transmit power of the last is determined by
\begin{equation}
p_N = \min\left\{P_{T,\mathrm{max}}, \frac{2^{2B_{\mathrm{R},N}/T_f}-1}{\gamma_N}\right\},
\label{eqn:pN}
\end{equation}
where $P_{T,\mathrm{max}}$ is the maximum achievable transmit power. However for the online settings, the transmission completion can not be guaranteed due to the fading, the maximum transmit power limits, and so on. We assume that the packets remained in the transmitter by the end of the last frame are dropped, and define the \emph{packet dropping probability} as the ratio between the number of bits remained after the transmission and the total amount of bits arrived. Online algorithms should minimize the grid energy consumption as well as the packet dropping probability.

\subsection{Random packet arrival}

For the case that the bits arrive at the transmitter randomly, the online algorithms can be proposed as a direct extension of that for all packets ready before transmission case. Specifically, the constant water level algorithm is as follows. Assume the average packet arrival rate is $\bar{B}$ bits/frame, then the constant water level $1/\gamma_0$ can be obtained by
\begin{equation}
\int_{\gamma_0}^{\infty}\frac{1}{2} \log_2 \left(
\frac{\gamma}{\gamma_0} \right) f(\gamma) d\gamma =
\frac{\bar{B}}{T_f}.
\end{equation}
The difference with the previous one is that, as the available number of bits is finite, the power used in frame
$i$ should not exceed
\begin{equation}
p_{\mathrm{max}} = \min\left\{P_{T,\mathrm{max}}, \frac{2^{2B_{\mathrm{R},i}/T_f}-1}{\gamma_i}\right\}. \label{pi:ran}
\end{equation}
Hence the transmit power is expressed as $\tilde{p}_i = \min\left\{
p_i, p_{\mathrm{max}}\right\},$ where $p_i$ is expressed as (\ref{eqn:pi}).

The adaptive water level can be calculated as
\begin{equation}
\int_{\gamma_{0,i}}^{\infty}\frac{1}{2} \log_2 \left(
\frac{\gamma}{\gamma_{0,i}} \right) f(\gamma) d\gamma =
\frac{B_{\mathrm{R},i} + (N-i)\bar{B}}{(N-i+1)T_f},
\end{equation}
where the right side of the equation is the total number of remaining bits approximated as the sum
of current remaining bits $B_{\mathrm{R},i}$ and expected arrival in the upcoming
frames.

The battery overflow protection algorithm needs to be modified as
the number of available bits in each frame is random and finite.
In particular, the transmit power is determined by $p_i' = \min\left\{\max\{p_i, p_{\mathrm{min}}\}, p_{\mathrm{max}}\right\},$
where $p_{\mathrm{min}}$ is expressed as (\ref{p:min}).

\section{Numerical Results} \label{sec:simu}
In this section, we test the performance of the proposed algorithms.
We take the EARTH project simulation parameters for the
femto-cell scenario \cite{Earth:2010}. Specifically, we set the maximum transmit power as 33dBm for simulations. A total number of 1000 monte carlo simulation runs are performed for each parameter setup. For each
run, the channel reference SNRs and the energy arrivals are
randomly generated. The distribution of reference SNR follows the
Rayleigh model with probability density function
$p_{\gamma}(\gamma_i) =
\frac{1}{\bar{\gamma}}\exp\left(-\frac{\gamma_i}{\bar{\gamma}}\right)$,
where $\bar{\gamma}$ is the average reference SNR (set to be 0dB).
The harvested energy follows non-negative uniform distribution with
mean $P_{H,\mathrm{ave}} = 20$dBm.
The offline optimal solution, which is obtained by solving the
convex optimization problem assuming all the reference SNRs and the energy
arrivals are known before transmission, can be considered as the performance
upper bound.

\begin{figure}
\centering
\includegraphics[width=4.5in]{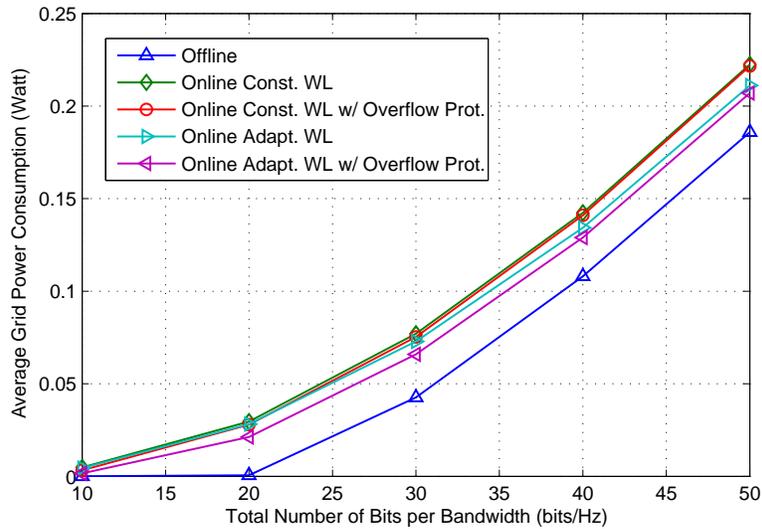}
\caption{Performance of average grid power consumption versus number of bits per bandwidth when packets are ready before transmission. $N = 100, P_{H,\mathrm{ave}} = 20$dBm, and the relative battery capacity $M=3$.}
\label{fig:VarBits}
\end{figure}

We first exam the performance of the algorithms for the all packets ready before transmission case. We
run $N=100$ frames in each simulation. The battery capacity is set $M$ times of the average arrived energy in one frame, i.e,
$E_{\mathrm{max}} = MP_{H,\mathrm{ave}}T_f$. The parameter $M$ is called \emph{relative battery capacity}. And we set $M=3$ in this setup. The average grid power consumption achieved versus the number of bits to be transmitted is shown in Fig.~\ref{fig:VarBits}. It shows that the adaptive water level algorithm outperforms the constant one. Because of the finite time length, the adaptive algorithm can better utilize the energy. Besides, the battery overflow protection algorithm improves the performance. It efficiently reduces the energy overflow at low grid power regime, but the gain shrinks as the grid power increases.

\begin{figure}
\centering
\includegraphics[width=4.5in]{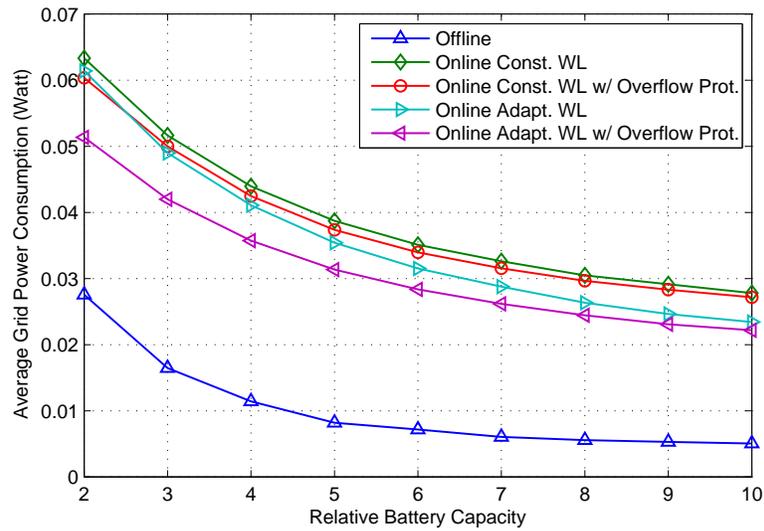}
\caption{Performance of average grid power consumption versus relative battery capacity when packets are ready before transmission. $N = 100, P_{H,\mathrm{ave}} = 20$dBm, and the number of bits per bandwidth is 25 bits/Hz.}
\label{fig:PminVarEmax}
\end{figure}

We also investigate the influence of renewable energy battery capacity on the
performance. Fig.~\ref{fig:PminVarEmax} shows the relation between the
battery capacity and the grid power consumption. Here, we set the number of bits per bandwidth as 25 bits/Hz. The performance
gap between the online algorithm and the offline algorithm becomes
small as the battery capacity increases. It shows that larger
battery capacity makes the power allocation more flexible. Also,
similar to the result from Fig. \ref{fig:VarBits}, the battery
overflow protection algorithm is efficient at low battery capacity
regime.

\begin{figure}
\centering
\includegraphics[width=4.5in]{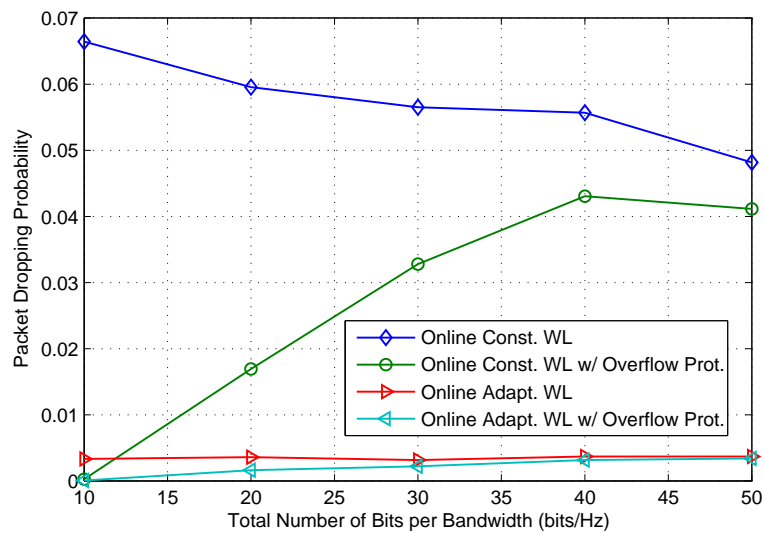}
\caption{Packet dropping probability versus number of bits per bandwidth when packets are ready before transmission. $N = 100, P_{H,\mathrm{ave}} = 20$dBm, and the relative battery capacity $M=3$.}
\label{fig:incomplete}
\end{figure}

Fig.~\ref{fig:incomplete} demonstrates the packet dropping probability for different packets transmission requirements. It can be found that the adaptive algorithm can effectively complete the transmission for almost all the conditions (the packet dropping probability is less than 0.4\%). On the contrary, the constant water level algorithm performs worse. Specifically, without overflow protection, the packet dropping probability keeps relatively high ($>4$\%), while with overflow protection, it is small for very few packets transmission requirement, and then increases and converges to that without overflow protection. Hence, the adaptive water level algorithm is an effective online algorithm.

\begin{figure}
\centering
\includegraphics[width=4.5in]{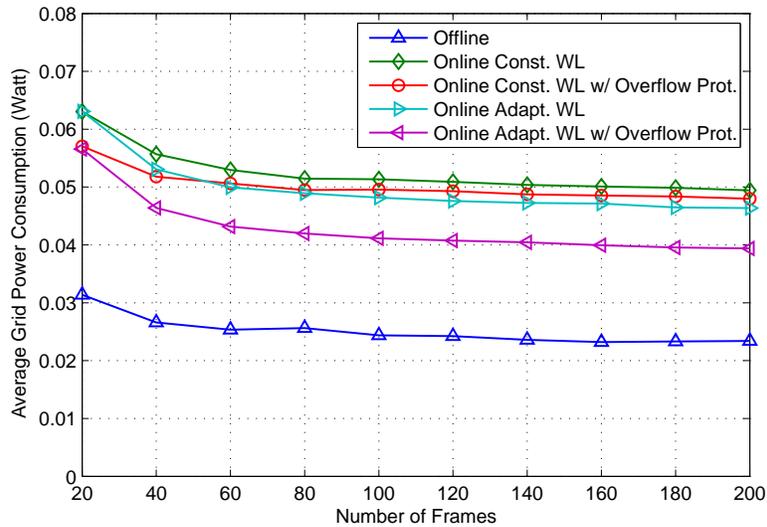}
\caption{Performance of average grid power consumption versus number of frames when packets are ready before transmission. $P_{H,\mathrm{ave}} = 20$dBm, $M = 3$, and the number of bits is 0.25 bits/Hz/frame.} \label{fig:VarFrame}
\end{figure}

In addition, the average grid power consumption versus the number of considered frames $N$ is depicted in Fig.~\ref{fig:VarFrame}. Generally speaking, the grid power consumption becomes stable when the number of frames gets large. It can be seen that $N\ge100$ is enough to get stable results, which validates our parameter setting.

\begin{figure}
\centering
\includegraphics[width=4.5in]{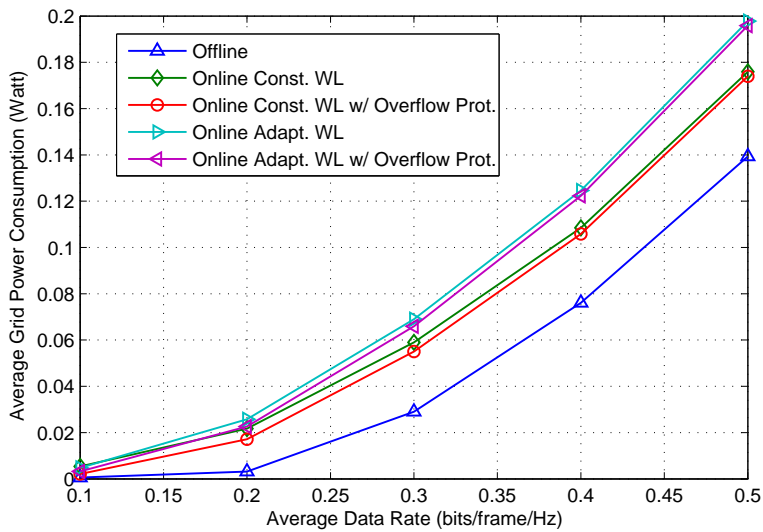}
\caption{Performance of average grid power consumption versus average data rate when packets randomly arrive during transmission. $N = 100, P_{H,\mathrm{ave}} = 20$dBm, and the relative battery capacity $M=3$.}
\label{fig:VarRanBit}
\end{figure}

\begin{figure}
\centering
\includegraphics[width=4.5in]{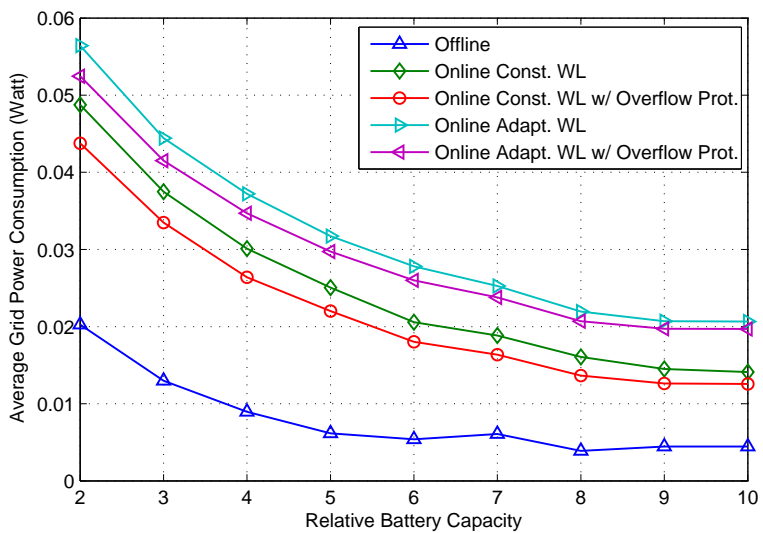}
\caption{Performance of average grid power consumption versus relative battery capacity in grid power minimization problem when packets randomly arrive during transmission. $N = 100, P_{H,\mathrm{ave}} = 20$dBm, and the average data rate is 0.25 bits/frame/Hz.}
\label{fig:VarRanEmax}
\end{figure}

\begin{figure}
\centering
\includegraphics[width=4.5in]{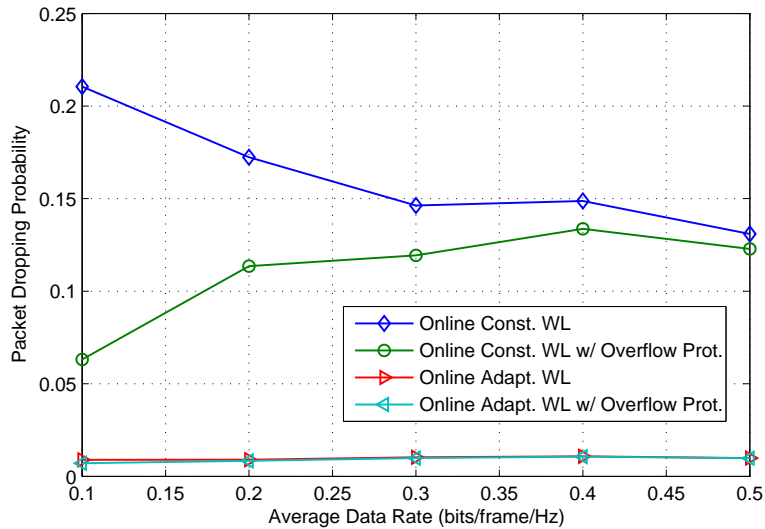}
\caption{Packet dropping probability versus average data rate when packets randomly arrive during transmission. $N = 100, P_{H,\mathrm{ave}} = 20$dBm, and the relative battery capacity $M=3$.}
\label{fig:incompleteRan}
\end{figure}

We then evaluate the performances of the algorithms for random packets arrival case. Figs.~\ref{fig:VarRanBit} and \ref{fig:VarRanEmax} show the grid power consumption versus average data arrival rate and relative battery capacity, respectively. In contrast to the previous results, we find that the constant water level algorithm with battery overflow protection achieves lower grid power consumption than the adaptive one. However, it is observed from Fig.~\ref{fig:incompleteRan} that the constant water level algorithm results in much higher packet dropping probability ($>12$\%) compared with the adaptive algorithm ($<1$\%). The constant water level algorithm reduces the grid power consumption by dropping a large amount of arrived bits, which can not well meet the data transmission requirement. Hence, in random packets arrival scenario, the adaptive water level algorithm requires more grid energy than the constant water level algorithm in order to keep a low packet dropping probability.

\section{Conclusion} \label{sec:conc}
We have analyzed the optimal power allocation structure of the energy harvesting and power grid coexisting systems. We consider the power grid energy minimization problem and solve it by convex optimization. The two-stage WF algorithm is proposed to solve the problem in the case that all packets are ready before transmission, and the multi-stage algorithm is proposed to solve it in the case of infinite battery capacity. In addition, online algorithms motivated by the offline solutions are also proposed and evaluated by numerical simulations. For all packets ready before transmission case, the adaptive water level algorithm outperforms the constant one for both grid power consumption and packet dropping probability. On the contrary, if the packets randomly arrive in each frame, constant water level algorithm achieves lower power consumption but causes much higher packet dropping probability.

%\appendices

% Can use something like this to put references on a page
% by themselves when using endfloat and the captionsoff option.
\ifCLASSOPTIONcaptionsoff
  \newpage
\fi

\end{document}